\newtheorem{theorem}{Theorem}[section]
\newtheorem{lemma}{Lemma}[section]
\newtheorem{corollary}{Corollary}[section]
\newtheorem{fact}{Fact}[section]
\newcommand{\qed}{\hfill $\Box$ \bigbreak}
\newenvironment{proof}{\noindent {\bf Proof.}}{\qed}
\newcommand{\remove}[1]{}
\newcommand{\const}{\ensuremath{\gamma}}
\newcommand{\probfrac}{\ensuremath{\frac{\epsilon}{4}}}
\title{Global Synchronization and Consensus Using Beeps in a Fault-Prone MAC}
\author{Kokouvi Hounkanli, Avery Miller, Andrzej Pelc\footnote{Partially supported by NSERC discovery grant 8136 -- 2013 and by the Research Chair in Distributed Computing at the Universit\'e du Qu\'{e}bec en Outaouais.}
\\Universit\'{e} du Qu\'{e}bec en Outaouais, Gatineau, Canada.}
\begin{document}

\maketitle

\begin{abstract}
Consensus is one of the fundamental tasks studied in distributed computing. Processors have input values from some set $V$ and they have to decide
the same value from this set. If all processors have the same input value, then they must all decide this value.
We study the task of consensus in a Multiple Access Channel (MAC) prone to faults, under a very weak communication model called the {\em beeping model}.
Communication proceeds in synchronous rounds. Some processors wake up spontaneously, in possibly different rounds decided by an adversary. In each round, an awake processor can either listen, i.e., stay silent,
or beep, i.e., emit a signal. In each round, a fault can occur in the channel independently with constant probability $0<p<1$. In a fault-free round,
an awake processor hears a beep if it listens in this round and if one or more other processors beep in this round.
A processor still dormant in a fault-free round in which some other processor beeps is woken up by this beep and hears it. In a faulty round nothing is heard, regardless of the behaviour of the processors.

An algorithm working with error probability at most $\epsilon$, for a given $\epsilon>0$, is called $\epsilon$-{\em safe}. Our main result is the design and analysis,
for any constant $\epsilon>0$, of
a {deterministic} $\epsilon$-safe consensus algorithm that works in time $O(\log w)$ in a fault-prone MAC, where $w$ is the smallest input value of all participating processors.
We show that this time cannot be improved, even when the MAC is fault-free. The main algorithmic tool that we develop to achieve our goal, and that might be of independent interest,
is a  deterministic algorithm that, with arbitrarily small constant error probability, establishes a global clock in a fault-prone MAC in constant time.

\vspace{2ex}

\noindent {\bf Keywords:} consensus, multiple access channel, fault, beep. 
\end{abstract}

%\vfill
%
%
%
%\vfill

%\thispagestyle{empty}
%\setcounter{page}{0}
%\pagebreak

%%%%%%%%%%%%%%%%%%%%%%%%%%%%%%%%%%%%%%%%%%%%%%%%%%%%%%%%%%%
\section{Introduction}
%%%%%%%%%%%%%%%%%%%%%%%%%%%%%%%%%%%%%%%%%%%%%%%%%%%%%%%%%%%

\noindent
{\bf Background.}
Consensus is one of the fundamental tasks studied in distributed computing \cite{Ly}. 
Processors have input values from some set $V$, and they have to decide
the same value from this set. If all processors have the same input value, then they must all decide this value.
Consensus has mostly been studied in the context of fault-tolerance. Either the communication between processors is assumed prone to faults \cite{G,SW1,SW2},
or processors themselves can be subject to crash \cite{Cho08, MR} or Byzantine \cite{PSL} faults. In the present paper, we study a scenario falling under the first 
of these variants.

\noindent
{\bf Model and Problem Description.}
We study the task of consensus defined as follows \cite{Ly}. Processors have input values from some set $V$ of non-negative integers. The goal for all processors is to satisfy the following requirements.

\begin{itemize}
\item {\em Termination}: all processors must output some value from $V$.
\item {\em Agreement}: all output values must be equal.
\item {\em Validity}: if all input values are equal to $v$, then all output values must be equal to $v$.
\footnote{Some authors use a stronger validity condition in which the output values must always be one of the input values, even if these are non-equal. In this paper we use the above formulation
from \cite{Ly}. }
\end{itemize}

We study the task of consensus in a Multiple Access Channel (MAC). In a MAC, all processors can communicate directly, i.e., the underlying communication graph is complete. 
Communication proceeds in synchronous rounds. 
Some processors wake up spontaneously, in possibly different rounds decided by an adversary. Each processor has a local clock that starts at its wake-up, showing round number 0. All clocks tick at the same rate, one tick per round. This is a weak version of
synchrony, which should be contrasted with the assumption of a {\em global clock} where the clock of each processor shows a global round number that is equal for all of them.

We adopt a very weak communication model called the {\em beeping model}. We assume that processors are fault-free, while the MAC is prone to random faults. Faults in the channel may be due to some random noise occurring in the background.
In each round, an awake processor can either {\em listen}, i.e., stay silent,
or {\em beep}, i.e., emit a signal. In each round, a fault can occur in the channel independently with constant probability $0<p<1$. The value of $p$ is known by all processors. In a fault-free round,
an awake processor hears a beep if it listens in this round and if one or more other processors beep in this round. In a faulty round, nothing is heard regardless of the behaviour of the processors.
A processor that is still dormant in a fault-free round in which some other processor beeps is woken up by this beep and hears it.

The beeping model was introduced
in \cite{CK} for vertex coloring,  used
in \cite{AABCHK} to solve the MIS problem, and later used in \cite{FSW,GH} to solve leader election. The beeping model is widely applicable, as it makes small demands on communicating devices by relying only on carrier sensing. In fact, as mentioned in 
\cite{CK}, beeps are an even weaker way of communicating than using one-bit messages: one-bit messages allow three different states (0,1 and no message),
while beeps permit to differentiate only between a signal and its absence. 

We do not assume that processors in the channel have access to any random generator. We study deterministic consensus algorithms working in a probabilistic
fault-prone MAC, which work
with error probability at most $\epsilon$, for a given $\epsilon>0$. Such algorithms are called $\epsilon$-{\em safe}.  We assume that all processors know the value of $\epsilon$.

\noindent
%----------------------------------
\subsection{Our results} 
Our main result is the design and analysis,
for any constant $\epsilon>0$, of
a {deterministic} $\epsilon$-safe consensus algorithm that works in time $O(\log w)$ for a fault-prone MAC, where $w$ is the smallest input value of all participating processors.
We show that this time cannot be improved by a deterministic algorithm, even when the MAC is fault-free. Moreover, we show how to reach consensus in the same round. Hence, as formulated in \cite{MR},
we reach ``double agreement, one on the decided value (data agreement) and one on the decision round (time
agreement)''.

The main algorithmic tool that we develop to achieve our goal, and that might be of independent interest,
is a {deterministic} algorithm that, with arbitrarily small constant error probability, establishes a global clock in a fault-prone MAC in constant time.

\noindent
\subsection{Related work}
%%%%%%%%%%%%%%%%%%%%%%%%%%%%%%%%%%%%
The Multiple Access Channel (MAC) is a popular and well-studied medium of communication. Most research concerning the MAC
has been done under the radio communication model in which processors can send an entire message in a single round, and this message
is heard by other processors if exactly one processor transmits, and all others listen in this round. This communication model is incomparable to the beeping model:
on the one hand it is much stronger, as large messages (and not only beeps) can be sent in a single round, but on the other hand it is weaker, as it requires
a unique transmitter in a round to make the transmission successful, while in the beeping model many beeps may be heard simultaneously.
Leader election was studied in a MAC under the radio model, both in the deterministic \cite{CMS,GW} and in the randomized setting \cite{BGI,Wil}.

Consensus is a classic problem in distributed computing, mostly studied assuming that processors
communicate by shared variables or through message passing networks \cite{AW,Ly}.
See the recent book \cite{Ra} for a comprehensive survey of the literature on consensus, mostly concerning processor faults. 
In \cite{GK}, the authors showed a randomized consensus for crash faults with optimal communication complexity.
In~\cite{Cho08}, the feasibility and complexity of consensus in a multiple access channel (MAC) with 
simultaneous wake-up and crash failures were studied in the context of different collision detectors.
Consensus and mutual exclusion in a MAC (without faults) were studied in \cite{CGKP}. The authors also investigated the impact of a global clock and of the capability of
collision detection on the time efficiency of consensus.
Consensus in the quantum setting has been studied, e.g., in \cite{CKS}. To the best of our knowledge, consensus with faulty beeps has never been studied before.

The differences between local and global clocks for the wake-up problem were first studied in \cite{GPP} and then in
\cite{CGKR,CK,CR}. The communication model used in these papers was that of radio networks in which the main challenge is the occurrence of collisions between
simultaneously received messages. A global clock is often used in the study of  broadcasting in radio networks
(cf. \cite{CR}).

\section{Global Synchronization}

In this section, we provide an algorithm \texttt{GlobalSync} that establishes a global clock. 
Upon its wake-up, each processor in the channel executes \texttt{GlobalSync} with its local clock initialized to 0.
The round in which the first wake-up occurs is defined as global round 0. Processors are not aware of the relationship between their local clock values and this 
global round. Establishing a global clock means that all processors in the channel exit \texttt{GlobalSync} in the same global round.

Fix any constant $\epsilon > 0$. Let $\const$ be a constant such that $p^{\const}< \probfrac$. Hence, in a sequence of $\const$ consecutive rounds of beeps, at least one of these beeps occurs in a fault-free round with probability
at least $1-\probfrac$.

We describe Algorithm \texttt{GlobalSync} whose aim is to ensure that all processors agree on a common global round, i.e. they establish a global clock.
At a high level, the algorithm proceeds as follows. 
A processor that wakes up spontaneously beeps periodically trying to wake up all other processors that are still dormant. These beeps will be called {\em alarm beeps}.
They are separated by time intervals of increasing size,  which prevents an adversary from setting wake-up times so that all alarm beeps are aligned. 
In the intervals between alarm beeps, the processor is waiting for a response from 
other processors to indicate that they heard an alarm beep. If a large enough number of such intervals occur without any response, then the processor assumes that 
the entire channel was woken up at the same time, and a global round is chosen as the round in which the next alarm beep is scheduled.
Otherwise, if a beep was heard in one of these intervals, the processor listens
for $2\const$ consecutive rounds and then beeps for $2\const$ consecutive rounds. Similarly, a processor woken up by a beep listens
for $2\const$ consecutive rounds and then beeps for $2\const$ consecutive rounds.
The global round chosen by the algorithm is the round $r+4\const +1$, where
$r$ is the first round when an alarm beep was heard by some processor. The difficulty is for each processor to determine the round $r$.
This is because, when a beep is heard, 
 there are two possible cases:
such a beep may be an alarm beep from another processor, or may be in response to an alarm beep. 
We overcome this difficulty as follows. Time is divided into blocks of $2\const$ consecutive rounds. If a single beep is heard in a block, the processor concludes that it was an alarm beep; if more than one beep is heard in a block, the processor concludes that these beeps were in response to an alarm beep. We will prove that
such conclusions are correct with sufficiently high probability.
Finally, each processor considers the first round $s$ in which it heard a beep. If this beep was an alarm beep, 
the processor sets $r=s$. If this beep was in response to an alarm beep, the processor sets $r$ to be the most recent round before $s$ in which it beeped. 

We now provide the details of Algorithm \texttt{GlobalSync}. The following procedure provides an aggregate count of the beeps recently heard by a processor. More specifically, for a given round $t'$, the next $4\const$ rounds
are treated as two blocks of $2\const$ rounds each, and for each block, the cases of 0, 1, or more beeps are distinguished.

\begin{algorithm}[H]
\caption{\texttt{listenVector}$(t')$}
\begin{algorithmic}[1]
\State $h_1 \leftarrow 0$
\State $h_2 \leftarrow 0$
\State $num_1 \leftarrow $ number of beeps heard in rounds $t',t'+1,\ldots,t'+2\const-1$
\State $num_2 \leftarrow $ number of beeps heard in rounds $t'+2\const,\ldots,t'+4\const-1$
\State {\bf if} $num_1 = 1$, then $h_1 \leftarrow 1$
\State {\bf if} $num_1 > 1$, then $h_1 \leftarrow *$
\State {\bf if} $num_2 = 1$, then $h_2 \leftarrow 1$
\State {\bf if} $num_2 > 1$, then $h_2 \leftarrow *$
\State {\bf return} $[h_1\ h_2]$
\end{algorithmic}
\label{listenVector}
\end{algorithm}

%\newpage

Below we give the pseudocode of Algorithm \texttt{GlobalSync} using the above procedure.

\begin{algorithm}[H]
\caption{\texttt{GlobalSync}}
\begin{algorithmic}[1]

\State {\bf if} woken up by a beep in some round $heard$: \Comment{woken up by beep} \label{wokenbybeep}

\State \indent beep $2\const$ consecutive rounds starting at round $heard+2\const+1$ \label{wokenresponse}
\State \indent $syncRound \leftarrow heard+4\const+1$\label{wokenoutput}

\State{\bf else}: \Comment{woken up spontaneously}
\State \indent $i \leftarrow 0$
\State \indent $myNextBeep \leftarrow 0$
\State \indent {\bf repeat}: \label{loop}
\State \indent \indent $myCurrentBeep \leftarrow myNextBeep$
\State \indent \indent beep in round $myCurrentBeep$ \label{alarm}
\State \indent \indent $i \leftarrow i+1$
\State \indent \indent $myNextBeep \leftarrow myCurrentBeep + 4\const + i$
%\State \indent \begin{varwidth}[t]{\linewidth} {\bf until} ($i=3\const$) or \par
%			\hskip\algorithmicindent\hspace{3mm} (a beep is heard in one of $\{myCurrentBeep+1,\ldots,myNextBeep-1\})$
%			 \end{varwidth}
\State \indent  {\bf until} ($i=3\const$) or (a beep is heard in one of $\{myCurrentBeep+1,\ldots,myNextBeep-1\})$
\State \indent {\bf if} $i=3\const$: \label{exhausted}
\State \indent \indent $syncRound \leftarrow myNextBeep$ \label{exhaustedoutput}
\State \indent {\bf else}: \label{elseline}
\State \indent \indent $heard \leftarrow$ first round after $myCurrentBeep$ in which a beep was heard
\State \indent \indent $[h_1\ h_2] = listenVector(myCurrentBeep+1)$
\State \indent \indent {\bf if} $[h_1\ h_2] \in \{[0\ 0], [0\ 1], [1\ 0], [1\ 1], [1\ *]\}$: \label{IfNotAlarm}
\State \indent \indent \indent beep $2\const$ consecutive rounds starting at round $heard+2\const+1$\label{heardresponse}
\State \indent \indent \indent $syncRound \leftarrow heard+4\const+1$ \label{heardoutput}
\State \indent \indent {\bf if} $[h_1\ h_2] \in \{[0\ *], [*\ 0], [*\ 1], [*\ *]\}$: \label{IfAlarm}
\State \indent \indent \indent $syncRound \leftarrow myCurrentBeep+4\const+1$ \label{alarmoutput}

\State wait until round $syncRound$ and {\bf exit} 

\end{algorithmic}
\label{synchronize}
\end{algorithm}

In the analysis of Algorithm  \texttt{GlobalSync} we refer to global rounds, but it should be recalled that processors in the channel  
do not have access to the global clock values: all a processor sees is its local clock.  
The following fact follows from the algorithm description by induction on $i$.

\begin{fact}\label{myNextBeep}
At the end of each loop iteration, the variable $myNextBeep$ is equal to $4\const i + \sum_{k=1}^{i} k$. Further, if a processor is woken up at time $t$, then, at the end of each loop iteration, $myNextBeep$ is equal to the local clock value corresponding to the global round $t + 4\const i + \sum_{k=1}^{i} k$.
\end{fact}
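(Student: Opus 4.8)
The plan is to prove Fact~\ref{myNextBeep} by a straightforward induction on the loop iteration counter $i$, tracking the value of $myNextBeep$ at the end of each iteration and then reinterpreting this value in terms of global rounds. For the first part, I would set up the induction by observing from lines in the pseudocode that before the first iteration $myNextBeep = 0$ and $i = 0$, and that within each iteration the variable is updated by first copying $myNextBeep$ into $myCurrentBeep$, then incrementing $i$, then setting $myNextBeep \leftarrow myCurrentBeep + 4\const + i$. Thus if at the end of iteration $i-1$ we have $myNextBeep = 4\const(i-1) + \sum_{k=1}^{i-1} k$ (with the base case $i=1$ handled by the initial value $0$ together with the update $0 + 4\const + 1$), then in iteration $i$ the new value becomes $\left(4\const(i-1) + \sum_{k=1}^{i-1} k\right) + 4\const + i = 4\const i + \sum_{k=1}^{i} k$, which is exactly the claimed closed form.

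For the second part, I would use the fact that the processor's local clock starts at $0$ upon its wake-up at global round $t$, so all clocks tick in lockstep and the processor's local clock value $c$ always corresponds to global round $t + c$. Since $myNextBeep$ is a local clock value — it is the local round in which the next alarm beep is scheduled, and the processor wakes up at local time $0$ — the value $myNextBeep = 4\const i + \sum_{k=1}^{i} k$ corresponds to the global round $t + 4\const i + \sum_{k=1}^{i} k$. This is just an application of the one-to-one correspondence between local and global clock values for a spontaneously-woken processor.

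I do not expect any genuine obstacle here; the statement is essentially a bookkeeping consequence of the algorithm's update rule, and the main thing to be careful about is the off-by-one alignment of when $i$ is incremented relative to when $myNextBeep$ is updated (the increment happens before the update within the same iteration), as well as making the base case of the induction match the initialization $myNextBeep \leftarrow 0$ correctly. Once the first part is established, the second part is immediate from the definition of the local clock.
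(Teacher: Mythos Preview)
Your proposal is correct and matches the paper's approach exactly: the paper simply states that the fact ``follows from the algorithm description by induction on $i$'' without giving further details, and your induction argument (tracking $myCurrentBeep \leftarrow myNextBeep$, then $i \leftarrow i+1$, then $myNextBeep \leftarrow myCurrentBeep + 4\const + i$) is precisely the computation that substantiates this. The second part is, as you note, immediate from the local-to-global clock correspondence for a processor woken at global round $t$.
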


We say that a processor is \emph{lonely in round $t$} if it has not heard a beep in any round up to and including round $t$. Using Fact \ref{myNextBeep}, we can determine the number of rounds that elapse before a lonely processor beeps a given number of times.

\begin{fact}\label{beepbound}
Suppose that a processor $v$ wakes up spontaneously in round $t_1$. If $v$ is lonely in round $t_1 + 4\const i + i(i+1)/2$, then $v$ has beeped exactly $i$ times before this round.
\end{fact}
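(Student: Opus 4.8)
The plan is to derive Fact~\ref{beepbound} as a direct consequence of Fact~\ref{myNextBeep}, using the structure of the \texttt{repeat} loop. The key observation is that in the loop, the variable $myNextBeep$ records exactly the local clock value at which the processor has scheduled its next alarm beep, and beep number $i$ is emitted in round $myCurrentBeep$ during the $i$-th iteration. So I first want to pin down the round of the $i$-th beep. By Fact~\ref{myNextBeep}, at the end of iteration $i$ the value of $myNextBeep$ corresponds to global round $t_1 + 4\const i + \sum_{k=1}^{i} k = t_1 + 4\const i + i(i+1)/2$, and this is precisely the round in which the $(i{+}1)$-st beep would be emitted (since $myCurrentBeep \leftarrow myNextBeep$ at the top of the next iteration, on line~8, before beeping on line~9).

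The main point to check is that the loop actually runs through iteration $i$, i.e., that it does not terminate early. The \texttt{until} condition on line~13 can only be satisfied if either $i$ has reached $3\const$, or a beep has been heard in one of the rounds strictly between $myCurrentBeep$ and $myNextBeep$. By hypothesis $v$ is lonely in round $t_1 + 4\const i + i(i+1)/2$, meaning it has heard no beep in any round up to and including that round; in particular it has heard no beep in any of the inter-beep intervals that occur at or before that round. Hence the second disjunct of the \texttt{until} condition never fires up through the point where the $(i{+}1)$-st beep is scheduled. I also need $i < 3\const$ so the first disjunct does not fire prematurely, but if $i \geq 3\const$ then the processor would have exited with $syncRound = myNextBeep$ precisely at round $t_1 + 4\const i + i(i+1)/2$ by Fact~\ref{myNextBeep}, which is consistent with it having beeped exactly $i$ times; either way, the loop performs exactly $i$ full iterations before round $t_1 + 4\const i + i(i+1)/2$, each contributing one beep (line~9), and no beep is scheduled at or after that round until then. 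This gives exactly $i$ beeps before round $t_1 + 4\const i + i(i+1)/2$.

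I expect the only mild subtlety to be bookkeeping about whether the round $t_1 + 4\const i + i(i+1)/2$ itself is counted as ``before'': by Fact~\ref{myNextBeep} this round equals the local clock value assigned to $myNextBeep$ at the end of iteration $i$, which is the round of the $(i{+}1)$-st beep, so it is indeed excluded, and the count of beeps strictly before it is exactly $i$. The argument is therefore a short induction already packaged in Fact~\ref{myNextBeep} combined with the loop's exit condition, and there is no real obstacle beyond being careful with the off-by-one in the beep indexing.
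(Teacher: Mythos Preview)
Your proposal is correct and matches the paper's approach: the paper states Fact~\ref{beepbound} without proof, merely noting that it follows from Fact~\ref{myNextBeep}, and you have filled in exactly that derivation by tracking $myNextBeep$ through the loop and checking that loneliness prevents early exit. The only minor point is that your discussion of the case $i \geq 3\const$ is slightly loose (for $i > 3\const$ the claim would actually fail, since the processor stops beeping after $3\const$ iterations), but the fact is only ever applied in the paper for $i \leq 3\const$, so this is not a genuine gap.
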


The next lemma shows that, for a certain time interval after the first wake-up, no processor terminates its execution of \texttt{GlobalSync} without first hearing a beep.  

\begin{lemma}\label{notermination}
%Suppose that not all processors wake up spontaneously in the same round. 
Suppose that the first spontaneous wake-up occurs in round $t_1$.
Then no processor terminates its execution of \texttt{GlobalSync} in the time interval $[t_1,\dots, t_1+ 12\const ^2 + (3\const)(3\const+1)/2 -1]$ without hearing a beep.
\end{lemma}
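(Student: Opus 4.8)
The plan is to consider an arbitrary processor $v$ and to argue that it cannot reach line~\ref{wait until round $syncRound$} (i.e.\ exit) before round $t_1 + 12\const^2 + (3\const)(3\const+1)/2 - 1$ unless it has heard a beep. First I would observe that a processor woken up by a beep has, by definition, already heard a beep, so the only case to analyze is a processor that wakes up spontaneously, in some round $t_v \ge t_1$. For such a processor, the only way to set $syncRound$ without hearing a beep is to exit the {\bf repeat} loop via the condition $i = 3\const$ (line~\ref{exhausted}); if instead a beep is heard in one of the intervening rounds, then $v$ has heard a beep, and we are done. So the crux is: how long does it take a lonely spontaneously-woken processor to complete $3\const$ iterations of the loop?

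Next I would invoke Fact~\ref{beepbound} (equivalently Fact~\ref{myNextBeep}): if $v$ wakes up spontaneously in round $t_v$ and remains lonely, then it beeps for the $i$-th time in the round that is $4\const(i-1) + (i-1)i/2$ local-clock ticks after $t_v$, and after its $3\const$-th beep it must still wait out the interval $\{myCurrentBeep+1,\dots,myNextBeep-1\}$ — whose length after the $(3\const)$-th beep is $4\const + 3\const$ — before it can detect the loop-exit condition and set $syncRound = myNextBeep$. Plugging $i = 3\const$ into $4\const i + i(i+1)/2$ gives $12\const^2 + (3\const)(3\const+1)/2$, so the earliest local round at which $v$ can set $syncRound$ in the lonely case is $t_v + 12\const^2 + (3\const)(3\const+1)/2$, which is $\ge t_1 + 12\const^2 + (3\const)(3\const+1)/2$ since $t_v \ge t_1$. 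Hence $v$ does not exit before round $t_1 + 12\const^2 + (3\const)(3\const+1)/2$, i.e.\ it does not terminate within the stated interval $[t_1,\dots, t_1 + 12\const^2 + (3\const)(3\const+1)/2 - 1]$, without having heard a beep.

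The main obstacle I expect is bookkeeping precision: carefully aligning the off-by-one conventions in the pseudocode (the loop does one beep \emph{then} updates $myNextBeep$, the exit test is checked over the \emph{open} interval $\{myCurrentBeep+1,\dots,myNextBeep-1\}$, and $i$ is incremented before $myNextBeep$ is recomputed) so that the count of $3\const$ beeps and the trailing wait interval combine to exactly the claimed bound, rather than being off by a term of order $\const$. I would handle this by stating the loop invariant cleanly (after the $i$-th iteration, $v$ has beeped $i$ times and $myNextBeep$ corresponds to local round $t_v + 4\const i + i(i+1)/2$, which is Fact~\ref{myNextBeep}), and then noting that when the loop exits with $i = 3\const$ the processor is at the end of the $(3\const)$-th iteration and $syncRound$ is set to exactly this value; the processor can only have learned it should exit this way if it was lonely throughout, which is precisely the hypothesis. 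One should also note that the interval endpoint uses $t_1$, not $t_v$, and that a spontaneously-woken processor with $t_v > t_1$ only makes the bound easier to satisfy, while if $t_v = t_1$ the bound is tight.
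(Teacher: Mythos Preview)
Your proposal is correct and follows essentially the same approach as the paper: observe that a processor terminating without hearing a beep must have exited the \textbf{repeat} loop with $i=3\const$, then invoke Fact~\ref{myNextBeep} to compute that $syncRound = myNextBeep$ corresponds to global round $t_v + 12\const^2 + (3\const)(3\const+1)/2 \ge t_1 + 12\const^2 + (3\const)(3\const+1)/2$. Your version is more thorough about the off-by-one bookkeeping and explicitly dismisses the woken-by-beep case, but the core argument is identical.
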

\begin{proof}
If a processor $v$ terminates its execution of \texttt{GlobalSync} without hearing a beep, then its {\bf repeat} loop exited with $i=3\const$.
By line \ref{exhaustedoutput}, the processor will terminate in round $myNextBeep$, which, by Fact \ref{myNextBeep}, corresponds to the global round
 $t_v + 12\const ^2 + (3\const)(3\const+1)/2$, where $t_v \geq t_1$ is the wake-up round of processor $v$.
\end{proof}

In order to prove the correctness of the algorithm, we first consider the case when all processors wake up spontaneously in the same round.

\begin{lemma}\label{simultaneous}
Suppose that all processors wake up spontaneously in the same global round $t_1$. With probability 1, all processors terminate their execution of \texttt{GlobalSync} in global round $t_1 + 12\const ^2 + (3\const)(3\const+1)/2$.
\end{lemma}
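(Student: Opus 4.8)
The plan is to exploit the complete symmetry of this scenario. Since every processor wakes up spontaneously in the same global round $t_1$, no processor is ever woken up by a beep, so every processor executes the {\bf else} branch of \texttt{GlobalSync}; and since every local clock coincides with the global clock shifted by $t_1$, all processors run the {\bf repeat} loop in perfect lockstep. The whole proof is then a matter of tracking this lockstep execution and checking that the loop can only exit via the condition $i = 3\const$.

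First I would argue, by induction on the loop counter $i$ and using Fact \ref{myNextBeep}, that in every iteration all processors hold the same value of $myCurrentBeep$, hence all beep in exactly the same global round. Moreover, within the {\bf else} branch and before the loop exits, the only beeps a processor emits are these alarm beeps in round $myCurrentBeep$ (the response beeps are emitted only after a beep has been heard), so in every round strictly between two consecutive alarm beeps every processor is silent. Consequently, there is no round in which some processor listens while another processor beeps, and therefore no processor ever hears a beep during any interval $\{myCurrentBeep+1,\dots,myNextBeep-1\}$. This holds deterministically, irrespective of which rounds are faulty, because a fault only suppresses something that would otherwise be heard, and here nothing would be heard in any case.

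Next, since the loop-exit condition ``a beep is heard in one of $\{myCurrentBeep+1,\dots,myNextBeep-1\}$'' is never satisfied, every processor's {\bf repeat} loop terminates because $i = 3\const$. By the instruction $syncRound \leftarrow myNextBeep$ executed in that case, together with Fact \ref{myNextBeep}, each processor sets $syncRound$ to the local clock value corresponding to global round $t_1 + 4\const(3\const) + \sum_{k=1}^{3\const} k = t_1 + 12\const^2 + (3\const)(3\const+1)/2$. Every processor then waits until $syncRound$ and exits, so all processors exit in exactly that global round; and since nothing in the argument depended on the fault pattern, this occurs with probability $1$.

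There is essentially no hard part here; the only point requiring care is the claim that the alarm beeps of different processors are perfectly aligned and that the listening intervals between them are genuinely silent, which is precisely where Fact \ref{myNextBeep} and the lockstep induction are used, together with the observation that faults are irrelevant because a beeping processor hears nothing and a listening processor has no one to hear.
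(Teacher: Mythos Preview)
Your proof is correct and follows essentially the same approach as the paper: both argue that simultaneous spontaneous wake-up forces every processor into the \textbf{else} branch, that Fact~\ref{myNextBeep} keeps the alarm beeps perfectly aligned so no beep is ever heard, and that the loop therefore exits with $i=3\const$, yielding the stated $syncRound$. Your explicit remark that the argument is independent of the fault pattern is a nice addition the paper leaves implicit.
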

\begin{proof}
Since every processor is woken up spontaneously in global round $t_1$, the {\bf if} condition on line \ref{wokenbybeep} evaluates to false at every processor. Therefore, all processors execute the loop at line \ref{loop}. In particular, this means that all processors beep in their local round 0. By Fact \ref{myNextBeep}, at the end of each loop iteration, the variable $myNextBeep$ at every processor is equal to the local clock value corresponding to the global round $t_1 +4\const i + \sum_{k=1}^{i} k$. It follows that all processors beep in the same rounds. In particular, this means that no processor ever hears a beep. Thus, at every processor, the loop exits with $i=3\const$. So, the {\bf if} condition on line \ref{exhausted} evaluates to true. By line \ref{exhaustedoutput}, each processor sets $syncRound$ to the value $4\const (3\const) + \sum_{k=1}^{3\const} k = 12\const ^2 + (3\const)(3\const+1)/2$, which is their local clock value that corresponds to the global round $t_1 + 12\const ^2 + (3\const)(3\const+1)/2$. Therefore, all processors terminate their execution of \texttt{GlobalSync} in global round $t_1 + 12\const ^2 + (3\const)(3\const+1)/2$.
\end{proof}

Note that, when our algorithm is executed in the case where all processors wake up spontaneously in the same round, no processor ever hears a beep, and, after a fixed length of silence, all processors terminate their execution of \texttt{GlobalSync}. In the case where not all processors wake up spontaneously in the same round, if the same fixed length of silence is observed by all processors, then, again, all processors will terminate their execution of \texttt{GlobalSync}, but this time in different rounds. This would be a bad case for our algorithm. We now show that, with sufficiently high probability, such a bad case does not occur, i.e., that there exists some round $t^*$ in which a beep is heard by some processor.

\begin{lemma}\label{goodround}
Suppose that not all processors wake up spontaneously in the same round, and suppose that the first spontaneous wake-up occurs in some round $t_1$. With probability at least $(1-\frac{\epsilon}{2})$, there exists a global round $t^* \leq t_1 + 12\const^2 + (3\const)(3\const+1)/2$ in which all of the following hold: no processor has terminated its execution of \texttt{GlobalSync}, at least one processor beeps, at least one processor listens, and no fault occurs.
\end{lemma}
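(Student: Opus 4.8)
The plan is to reduce the claim to a single clean probabilistic event and then bound its probability using the increasing separations between consecutive alarm beeps. First note that a round $t^*$ has all four stated properties if and only if, in round $t^*$, some processor beeps, some awake processor is silent and hears that beep, and no processor has yet terminated (the only non-obvious direction being that a listening processor in a fault-free round hears the beeper). Moreover, by Lemma~\ref{notermination} no processor can terminate before global round $t_1+12\const^2+(3\const)(3\const+1)/2$ without first hearing a beep, and the timing of \texttt{GlobalSync} shows that a processor exits only several rounds after it first hears or is woken by a beep (e.g., a processor woken in round $heard$ exits only in round $heard+4\const+1$); so at the \emph{first} round in which some awake silent processor hears a beep --- if this round lies in the window --- nobody has terminated, and that round is good. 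Hence it is enough to prove that, with probability at least $1-\epsilon/2$, some awake silent processor hears a beep within the window $[t_1,\ t_1+12\const^2+(3\const)(3\const+1)/2]$.

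Consider the deterministic ``silent'' run $\mathcal{E}$ determined by the adversary's spontaneous wake-up schedule, in which no processor ever hears anything. In $\mathcal{E}$ the first waker $v$ remains lonely and beeps exactly $3\const$ times, in global rounds $r_j=t_1+f(j)$ for $j=1,\dots,3\const$, where $f(i)=4\const(i-1)+i(i-1)/2$; all these rounds lie in the window and $v$ is silent in between. Let $R$ be the set of window rounds in which, in $\mathcal{E}$, some processor beeps while at least one processor is dormant or is awake and silent. Since the real execution agrees with $\mathcal{E}$ up to the first round in which anything is heard, the event ``nothing is heard in the window'' coincides with ``every round of $R$ is faulty'', which has probability $p^{|R|}$. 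If instead some dormant processor is woken --- let $\tau_1\in R$ be the first round in which this happens --- yet still no awake silent processor ever hears a beep, then that woken processor replies by beeping for $2\const$ consecutive rounds starting in round $\tau_1+2\const+1$, during at least $2\const-1\ge\const$ of which $v$ (still running its loop) is silent; all those rounds must then be faulty. Since the admissible values of $\tau_1$ form an increasing subsequence of $R$ (so the rounds of $R$ below $\tau_1$ are forced faulty too), this case contributes at most $(1-p)\,p^{2\const-1}\sum_{\ell\ge1}p^{\ell-1}=p^{2\const-1}$. Altogether, the probability that no awake silent processor hears a beep in the window is at most $p^{|R|}+p^{2\const-1}$.

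It remains to show $|R|\ge\const$, and this is the crux; it is exactly where the increasing gaps are used. By hypothesis some processor $u$ has spontaneous wake-up round $\sigma(u)>t_1$. If $\sigma(u)$ lies beyond the window, then $u$ is dormant at all $3\const$ rounds $r_j$, so $R$ contains them all. Otherwise $u$ wakes inside the window and in $\mathcal{E}$ beeps along the shifted schedule $\sigma(u)+f(k)$; then every $r_j$ lies in $R$ (with $u$ dormant if $r_j<\sigma(u)$, silent otherwise) except when $u$ also beeps exactly in round $r_j$. A coincidence $r_j=\sigma(u)+f(k)$ forces $f(j)-f(k)=d:=\sigma(u)-t_1$ with $1\le k<j\le 3\const$; since the consecutive gaps $f(i)-f(i-1)=4\const+i-1$ are strictly increasing and lie in $(4\const,7\const)$ over the relevant indices, $j-k$ is confined to the interval $(d/(7\const),\,d/(4\const))$ and hence takes at most $3d/(28\const)+1$ values, while strict convexity of $f$ allows at most one pair $(j,k)$ per value of $j-k$; combined with $d\le f(3\const)$ this caps the number of coincidences at $2\const$. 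Therefore $|R|\ge 3\const-2\const=\const$, and the failure probability is at most $p^{|R|}+p^{2\const-1}\le 2p^{\const}<\epsilon/2$.

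The genuine obstacle is this coincidence count: it is the quantitative form of the intuition that increasing separations prevent an adversary from aligning all of $v$'s alarm beeps with another processor's beeps, and it is what forces the window length to be quadratic in $\const$. The remaining ingredients --- making precise the comparison with the silent run $\mathcal{E}$, correctly handling processors woken by beeps (their replies and their early exits at line~\ref{wokenoutput}/\ref{heardoutput}/\ref{alarmoutput}), and absorbing the $O(\const)$-round boundary effect near the end of the window (where a reply might fall outside it, in which case the ``nothing heard'' bound already applies) --- are routine but should be carried out carefully.
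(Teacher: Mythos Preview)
Your strategy---couple the real run with the deterministic ``silent'' run $\mathcal{E}$, lower-bound the number of informative rounds $R$, and treat separately the case where the first heard beep wakes a dormant processor---is a genuinely different route from the paper's. The paper never introduces $\mathcal{E}$ or $R$. Instead it fixes two specific processors $v_1$ (woken at $t_1$) and $v_2$ (the next to wake), and runs a direct two-stage union bound: with probability $\ge 1-\epsilon/4$ one of $v_1$'s first $\const$ alarm beeps is fault-free, which forces $t_2<t_1+4\const^2+\const(\const+1)/2$; then, among $v_2$'s first $2\const$ alarm beeps, at least $\const$ fall in rounds where $v_1$ listens, and with probability $\ge 1-\epsilon/4$ one of those is fault-free. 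The combinatorial heart is a one-line observation you do not use: if $v_2$'s $i_2$-th beep coincides with $v_1$'s $i_1$-th, then $i_1>i_2$, so $v_2$'s next beep comes strictly before $v_1$'s next beep and hence cannot coincide---no two \emph{consecutive} $v_2$-beeps can both collide with $v_1$-beeps. This immediately gives $\const$ collision-free beeps out of $2\const$, with no convexity bookkeeping and no dependence on the size of $\const$.

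Your proof, by contrast, has two soft spots. First, the coincidence bound $3d/(28\const)+1\le 2\const$ is not valid for all $\const$: with $d$ as large as the full window length ($\approx\!\frac{33}{2}\const^2$) you get roughly $1.77\const+1$, which exceeds $2\const$ for small $\const$; since $\const$ is only defined by $p^{\const}<\epsilon/4$, nothing prevents $\const$ from being $1$ or $2$. (The paper's ``no two consecutive collisions'' argument has no such constant-juggling.) Second, the dormant-woken case is not argued cleanly. You are right that under the failure event $v$ stays lonely, hence $\tau_1$ must be one of the $r_j$ (otherwise $v$ would hear at $\tau_1$), and that the woken processor's $2\const$ reply beeps overlap with at most one $r_j$; but the displayed probability $(1-p)\,p^{2\const-1}\sum_{\ell\ge1}p^{\ell-1}$ is not tied to a precise event decomposition---what exactly is $\ell$ indexing, and why are the corresponding fault events disjoint from the $2\const-1$ reply rounds? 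The paper sidesteps this whole case: by working with $v_2$'s beeps rather than $v_1$'s, it never needs to distinguish ``woken dormant'' from ``heard by an awake listener,'' and the argument stays inside a single union bound.

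In short, your outline can be made to work, but as written it is a sketch with real gaps in the constants and in the second case analysis; the paper's proof is both shorter and avoids those difficulties by reversing the roles (track the later waker's beeps and ask when $v_1$ hears one) and by exploiting the no-two-consecutive-collisions fact.
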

\begin{proof}
By Lemma \ref{notermination}, if at least one processor terminates its execution of \texttt{GlobalSync} in the interval $[t_1,\ldots,t_1+12\const^2 + (3\const)(3\const+1)/-1]$, then there exists a round $t^*$ in this interval before the first such termination with the property that at least one processor beeps, at least one processor listens, and no fault occurs, as claimed.

So, we proceed with the assumption that no processor terminates its execution of \texttt{GlobalSync} in the interval $[t_1,\ldots,t_1+12\const^2 + (3\const)(3\const+1)/-1]$. Let $t_2$ be the first round after $t_1$ such that some processor wakes up in round $t_2$. Let $v_1$ be a processor that wakes up in round $t_1$, and let $v_2$ be a processor that wakes up in round $t_2$. 

If processor $v_1$ hears a beep in some round $t^*$ in the interval $[t_1,\ldots,t_1+4\const^2+(\const)(\const+1)/2]$, we are done. So, in the rest of the proof, we assume that $v_1$ is lonely in round $t_1+4\const^2+(\const)(\const+1)/2$. By Fact \ref{beepbound}, $v_1$ beeps exactly $\const$ times in the interval $[t_1,\ldots,t_1+4\const^2+(\const)(\const+1)/2-1]$.

If processor $v_2$ hears a beep in some round $t^*$ in the interval $[t_1,\ldots,t_1+12\const^2+(3\const)(3\const+1)/2-1]$, we are done. Hence, in the rest of the proof, we assume that $v_2$ is lonely in round $t_1+12\const^2+(3\const)(3\const+1)/2-1$. This assumption implies that, if $t_2 \leq t_1+12\const^2+(3\const)(3\const+1)/2-1$, then $v_2$'s wake-up is spontaneous.

%If processor $v_2$ is woken up by a beep in some round $t^*$ in the interval $[t_1,\ldots,t_1+12\const^2 + (3\const)(3\const+1)/2-1]$, we are done. Hence, in the rest of the proof, we assume that $v_2$ woke up spontaneously in the interval $[t_1,\ldots,t_1+12\const^2 + (3\const)(3\const+1)/2-1]$, or $t_2 \geq t_1+12\const^2 + (3\const)(3\const+1)/2$.

First, we show that, with probability at least $1-\probfrac$, we have $t_2 < t_1 + 4\const^2 + (\const)(\const+1)/2$. To see why, recall that $v_1$ beeps $\const$ times in the interval $[t_1,\ldots,t_1+4\const^2+(\const)(\const+1)/2-1]$. With probability at least $1-\probfrac$, one of these first $\const$ beeps by $v_1$ is in a fault-free round. Whenever one of the first $\const$ beeps by $v_1$ is in a fault-free round $s$, we have $t_2 < s$, since, otherwise, $v_2$ would be woken up by a beep in round $s \leq t_1 + 12\const^2 + (3\const)(3\const+1)/2 - 1$, which contradicts our above assumption about $v_2$'s wake-up.

Next, we note that, with probability at least $1-\probfrac$, the first $2\const$ rounds in which $v_2$ beeps occur before round $t_1 + 12\const^2 + (3\const)(3\const+1)/2$ (i.e., before any processor has terminated its execution of \texttt{GlobalSync}). This is because we have already shown that, with probability at least $1-\probfrac$, we have $t_2 < t_1 + 4\const^2 + (\const)(\const+1)/2$, and, by Fact \ref{beepbound}, the first $2\const$ beeps by $v_2$ occur by round $t_2 + 8\const^2 + (2\const)(2\const+1)/2$.

Next, we show that, with probability at least $1-\probfrac$, in one of the first $2\const$ rounds in which $v_2$ beeps, processor $v_1$ listens and no fault occurs. We first note that, in at least $\const$ of the first $2\const$ rounds in which $v_2$ beeps, processor $v_1$ listens. To see why, we show that no two consecutive beeps by $v_2$ can occur in the same rounds as beeps by $v_1$. If $v_2$ beeps in some global round $t_2 + 4\const i_2 + \sum_{k=1}^{i_2} k$, and this is equal to some global round $t_1 + 4\const i_1 + \sum_{k=1}^{i_1} k$ in which $v_1$ beeps, then, since $t_2 > t_1$, we must have $i_1 > i_2$. It follows that $v_2$'s next beep will occur in round $[t_2 + 4\const (i_2) + \sum_{k=1}^{i_2} k] + 4\const + (i_2+1) =  [t_1 + 4\const (i_1) + \sum_{k=1}^{i_1} k] + 4\const + (i_2+1) < [t_1 + 4\const (i_1) + \sum_{k=1}^{i_1} k] + 4\const + (i_1+1)$, i.e., before $v_1$'s next beep. This proves that, in at least $\const$ of the first $2\const$ rounds in which $v_2$ beeps, processor $v_1$ listens. With probability at least $(1-\probfrac)$, at least one of these $\const$ beeps occurs in a fault-free round. Thus, with probability at least $(1-\probfrac)$, in one of the first $2\const$ rounds in which $v_2$ beeps, processor $v_1$ listens and no fault occurs.

Altogether, we have shown that, with probability at least $1-\probfrac$, the first $2\const$ rounds in which $v_2$ beeps occur before round $t_1 + 12\const^2 + (3\const)(3\const+1)/2$, and that, with probability at least $1-\probfrac$, one of the first $2\const$ beeps by $v_2$ is heard by $v_1$. It follows that, with probability at least $(1-\frac{\epsilon}{2})$, there is a round $t^* \leq t_1 + 12\const^2 + (3\const)(3\const+1)/2$ in which $v_2$ beeps, $v_1$ listens, no fault occurs, and no processor has terminated its execution of \texttt{GlobalSync}.
\end{proof}

We now proceed to prove the correctness of our algorithm for the case where not all processors wake up spontaneously in the same round. We will be able to do so when there exists a global round $t^*$ satisfying the conditions specified in Lemma \ref{goodround}. We will use the following lemma that establishes listening periods of processors.

\begin{lemma}\label{silence}
Suppose that not all processors wake up spontaneously in the same round. Let $t^*$ be the first global round in which all of the following hold: no processor has terminated its execution of \texttt{GlobalSync}, at least one processor beeps, at least one processor listens, and no fault occurs. Then, no processor beeps in rounds $t^*+1,\ldots,t^*+2\const$.
\end{lemma}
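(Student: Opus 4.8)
The plan is to classify every beep a processor can perform as either an \emph{alarm beep} (line \ref{alarm} of the loop) or part of a \emph{response burst} of $2\const$ consecutive beeps (lines \ref{wokenresponse} and \ref{heardresponse}), and to exclude each type in the window $[t^*+1,\ldots,t^*+2\const]$ separately. Two preliminary observations drive everything. First, since no fault occurs in round $t^*$ and some processor beeps in round $t^*$, every processor listening in round $t^*$ hears a beep in round $t^*$, and every processor still dormant just before round $t^*$ is woken up by this beep in round $t^*$; in particular no processor wakes up spontaneously after round $t^*$. Second, I claim that round $t^*$ is the \emph{first} round in which any processor hears a beep.

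To prove this second claim, let $r^*\le t^*$ be the earliest round in which some processor hears a beep (such a round exists because some processor hears a beep in round $t^*$), and suppose for contradiction that $r^*<t^*$. Then no beep is heard in any round of $[t_1,\ldots,r^*-1]$, where $t_1$ is the round of the first spontaneous wake-up. Consequently, in this interval no processor is woken up by a beep (being woken by a beep entails hearing it), so every awake processor in this interval was woken up spontaneously and is running the loop at line \ref{loop}; moreover none of these processors has exited the loop by round $r^*-1$, since exiting because a beep was heard is impossible and exiting with $i=3\const$ would be a termination without first hearing a beep, which Lemma \ref{notermination} forbids throughout $[t_1,\ldots,t_1+12\const^2+(3\const)(3\const+1)/2-1]\supseteq[t_1,\ldots,r^*-1]$ (using $r^*\le t^*\le t_1+12\const^2+(3\const)(3\const+1)/2$ from Lemma \ref{goodround}). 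Hence at the start of round $r^*$ no processor has terminated its execution of \texttt{GlobalSync}. But in round $r^*$ some processor beeps, some processor listens, and no fault occurs, so round $r^*$ satisfies all the conditions in the definition of $t^*$, whence $t^*\le r^*$ --- a contradiction. So $r^*=t^*$.

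With the claim in hand, the window $[t^*+1,\ldots,t^*+2\const]$ is handled directly. No response burst can overlap it: every response burst occupies rounds $\{heard+2\const+1,\ldots,heard+4\const\}$ for a processor whose variable $heard$ records a round in which it heard a beep, so by the claim $heard\ge t^*$ and the burst starts no earlier than round $t^*+2\const+1$. For alarm beeps, suppose some processor $u$ performs an alarm beep in a round $r\in[t^*+1,\ldots,t^*+2\const]$. Then $u$ is running the loop and was woken up spontaneously, in some round $t_u\le t^*$ (no spontaneous wake-ups occur after round $t^*$). Since $t_u\le t^*<r$, round $r$ is not $u$'s first alarm beep, and since consecutive alarm beeps of $u$ are separated by $4\const+i\ge 4\const+1>2\const$ rounds (Fact \ref{myNextBeep}), $u$'s alarm beep immediately preceding round $r$ occurs in round $r-(4\const+i)\le t^*+2\const-4\const-1<t^*$. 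Thus the loop iteration of $u$ scheduled to beep in round $r=myNextBeep$ has $myCurrentBeep<t^*<r$, so round $t^*$ lies in $\{myCurrentBeep+1,\ldots,myNextBeep-1\}$; and in round $t^*$ processor $u$ is listening (while in the loop its only beeps are its alarm beeps, and it has none in round $t^*$). By the first preliminary observation $u$ hears a beep in round $t^*$, so its loop exits in round $t^*$ and it never performs the alarm beep scheduled for round $r$ --- a contradiction. Since every beep performed by any processor is either an alarm beep or part of a response burst, no processor beeps in rounds $t^*+1,\ldots,t^*+2\const$.

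The main obstacle is the second paragraph --- showing that $t^*$ is the first round in which a beep is heard --- because it forces one to rule out an early termination; the resolution is that a termination without a prior heard beep is exactly what Lemma \ref{notermination} (with the time bound from Lemma \ref{goodround}) excludes, and every other mechanism that could push a processor past its loop (a heard beep, or being woken by a beep) itself entails a heard beep, so minimality of $r^*$ closes the argument.
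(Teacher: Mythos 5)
Your proof is correct, but it is organized differently from the paper's. The paper fixes an arbitrary processor $v$, splits into cases according to $v$'s state in round $t^*$ (woken by the beep at $t^*$; already awake and hearing the beep at $t^*$; beeping at $t^*$), and traces each case forward through the pseudocode to show that $v$'s next beep falls after round $t^*+2\const$. You instead classify the beeps themselves (alarm beeps from line~\ref{alarm} versus the $2\const$-round response bursts from lines~\ref{wokenresponse} and~\ref{heardresponse}) and exclude each kind from the window. Your route has two genuine advantages: it isolates and actually proves the fact that $t^*$ is the first global round in which any processor hears a beep, which the paper's proof uses only implicitly (e.g.\ when it asserts without justification that the local clock value corresponding to $t^*$ is what ends up stored in $v$'s $heard$ variable); and your treatment of alarm beeps --- a processor scheduled to alarm-beep inside the window must have been listening in round $t^*$, hence heard the beep and left the loop before that scheduled beep --- cleanly covers processors awake before $t^*$ without tracking their variables case by case.

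One justification should be tightened. In your second paragraph you rule out an exit of the loop with $i=3\const$ before round $r^*$ by combining Lemma~\ref{notermination} with the bound $t^*\le t_1+12\const^2+(3\const)(3\const+1)/2$ attributed to Lemma~\ref{goodround}. That bound is not a hypothesis of Lemma~\ref{silence}, and Lemma~\ref{goodround} only guarantees it with probability $1-\epsilon/2$, so as written this step imports an assumption the lemma does not make. The detour is unnecessary: the definition of $t^*$ already includes the condition that no processor has terminated by round $t^*$, and since termination is irreversible this immediately yields that no processor has terminated by any round $r^*<t^*$, which is all you need to conclude that $r^*$ satisfies the four defining conditions and hence $t^*\le r^*$.
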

\begin{proof}
There are several cases to consider. First, we consider each processor $v$ that does not beep in round $t^*$, and show that $v$ does not beep before round $t^*+2\const+1$. Suppose that $v$ is woken up by a beep in round $t^*$. In this case, by line \ref{wokenresponse}, $v$ waits $2\const$ rounds before its next beep, as claimed. Next, suppose that $v$ hears a beep in round $t^*$, and that $v$ was woken up before round $t^*$.
%and that $v$ attempted to beep in some round within the $4\const$ rounds before $t^*$. 
This case corresponds to the \textbf{else} clause at line \ref{elseline}. Note that the local clock value corresponding to $t^*$ is stored in $v$'s $heard$ variable. Since $v$'s next beep occurs at line \ref{heardresponse},  it follows that $v$ waits $2\const$ rounds after round $t^*$ before its next beep, as claimed.

Finally, consider the case where $v$ beeped in round $t^*$. This occurs at line \ref{alarm}, and note that the local clock value corresponding to $t^*$ is stored in $v$'s $myCurrentBeep$ variable.  If $v$ does not hear a beep between rounds $myCurrentBeep$ and $myNextBeep$, then $v$ will not beep again until round $myNextBeep = myCurrentBeep + 4\const + i > t^* + 2\const$, as claimed. If $v$ does hear a beep between rounds $myCurrentBeep$ and $myNextBeep$, say in some round $heard$, then $v$'s next beep occurs at line \ref{heardresponse}. This beep is in round $heard + 2\const + 1 > myCurrentBeep + 2\const + 1 = t^* + 2\const + 1$, as claimed.
\end{proof}

The next lemma shows that all processors terminate their execution of \texttt{GlobalSync} in the same global round soon after $t^*$.

\begin{lemma}\label{sameround}
Suppose that not all processors wake up spontaneously in the same round. Let $t^*$ be the first global round in which all of the following hold: no processor has terminated its execution of \texttt{GlobalSync}, at least one processor beeps, at least one processor listens, and no fault occurs. With probability at least $(1-\frac{\epsilon}{2})$, all processors terminate their execution of \texttt{GlobalSync} in global round $t^*+4\const+1$.
\end{lemma}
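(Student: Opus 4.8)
The plan is to track what each processor does in the rounds following $t^*$ and show that every one of them computes the same $syncRound$ value, namely the local-clock value corresponding to global round $t^*+4\const+1$. First I would classify processors into three groups according to their behaviour in round $t^*$: (i) processors woken up \emph{by a beep} in round $t^*$ (there is at least one such processor, namely one that was still dormant, unless all awake processors were already awake --- I will need to handle the possibility that the listener in round $t^*$ was already awake); (ii) processors that were already awake before round $t^*$ and \emph{heard} a beep in round $t^*$; (iii) processors that \emph{beeped} in round $t^*$. By Lemma \ref{silence}, no processor beeps in rounds $t^*+1,\ldots,t^*+2\const$, so the only beeps occurring in the window $[t^*,t^*+4\const]$ are: the beep(s) in round $t^*$ itself, and the ``response'' beeps that groups (i) and (ii) schedule for the $2\const$ consecutive rounds starting at $heard+2\const+1 = t^*+2\const+1$. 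For group (iii) processors, their $myCurrentBeep$ corresponds to $t^*$, and since by Lemma \ref{silence} they do not hear a beep in $t^*+1,\ldots,t^*+2\const$ but they \emph{will} hear the response beeps in $t^*+2\const+1,\ldots,t^*+4\const$ (with high probability --- this is where a fault-free round among the response beeps is needed), their first-heard-beep round $heard$ lies in $[t^*+2\const+1, t^*+4\const]$, so the block structure in \texttt{listenVector}$(myCurrentBeep+1) = $\texttt{listenVector}$(t^*+1)$ sees $0$ beeps in the first block (rounds $t^*+1,\ldots,t^*+2\const$) and at least one beep in the second block (rounds $t^*+2\const+1,\ldots,t^*+4\const$), giving $[h_1\ h_2] \in \{[0\ 1],[0\ *]\}$.

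The delicate point is that group-(iii) processors must end up in the branch at line \ref{IfNotAlarm} (which sets $syncRound \leftarrow heard+4\const+1$ and is \emph{not} what we want for them) versus... wait --- actually I need to recheck: $[0\ 1]$ is in the set at line \ref{IfNotAlarm}, while $[0\ *]$ is in the set at line \ref{IfAlarm}. So I would need to argue that with high probability \emph{enough} response beeps land in fault-free rounds that the second block registers $num_2 > 1$, i.e. $h_2 = *$. Concretely: there are $2\const$ response beeps scheduled in rounds $t^*+2\const+1,\ldots,t^*+4\const$, by one or more processors; by the choice of $\const$ (namely $p^\const < \epsilon/4$), with probability at least $1-\epsilon/4$ at least one of every $\const$-block of them is fault-free, so with probability at least $1-\epsilon/4$ at least two of the $2\const$ response beeps are heard, giving $num_2 > 1$ and $h_2 = *$. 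Then group-(iii) processors take the branch at line \ref{IfAlarm} and set $syncRound \leftarrow myCurrentBeep+4\const+1$, which corresponds to global round $t^*+4\const+1$. Meanwhile groups (i) and (ii) set $syncRound$ to the local-clock value of $heard + 4\const+1 = t^*+4\const+1$ directly (lines \ref{wokenoutput}, \ref{heardoutput}), so all three groups agree.

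Next I would argue that \emph{every} awake processor falls into one of these three groups and that there are no other processors around: any processor awake at or before round $t^*$ either beeped in $t^*$ (group (iii)) or listened in $t^*$; if it listened, it heard a beep (since a beep occurred and no fault occurred in round $t^*$), so it is in group (i) or (ii). Any processor still dormant at round $t^*$ is woken by the beep in round $t^*$, hence group (i). A processor woken \emph{after} round $t^*$ --- I need to rule this out, or rather show it cannot happen given $t^*$ is as specified: after round $t^*$ there is at least one awake processor who will keep beeping (or responding), but more cleanly, since some processor beeps in round $t^*$ and the beep is heard by some listener, and the round is fault-free, \emph{all} still-dormant processors are woken in round $t^*$; so no spontaneous or beep-induced wake-up can occur strictly after $t^*$. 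I also need: no processor terminates before round $t^*+4\const+1$ --- this follows because $t^*$ itself is before any termination by hypothesis on $t^*$, and the first post-$t^*$ scheduled termination is at $t^*+4\const+1$. Finally I assemble the probabilities: Lemma \ref{goodround} gives the existence of $t^*$ with probability $\ge 1-\epsilon/2$, and conditioned on $t^*$, the event ``at least two response beeps heard'' holds with probability $\ge 1-\epsilon/4 \ge 1-\epsilon/2$ --- hmm, I should be careful whether this lemma's statement already conditions on $t^*$ (it does: ``Let $t^*$ be the first global round...''), so here I only need the $1-\epsilon/2$ from the response-beep argument; a union bound over the at-most-two $\const$-blocks of response beeps, each failing with probability $<\epsilon/4$, gives failure probability $<\epsilon/2$, hence success with probability at least $1-\epsilon/2$, as stated.

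The main obstacle I expect is the case analysis on the value of $[h_1\ h_2]$ returned by \texttt{listenVector}$(t^*+1)$ at a group-(iii) processor, and in particular guaranteeing $h_2=*$ rather than $h_2=1$: this is exactly where the constant $\const$ and the fault probability bound $p^\const<\epsilon/4$ are used, and it is also where one must be sure that no \emph{other} beeps (e.g. stray alarm beeps from some processor whose clock is offset) contaminate the two blocks $[t^*+1,t^*+2\const]$ and $[t^*+2\const+1, t^*+4\const]$ --- which is precisely what Lemma \ref{silence} rules out for the first block, and what the ``all dormant processors woken in round $t^*$, none later'' observation rules out more broadly. A secondary subtlety is confirming that a group-(i) or group-(ii) processor's \emph{first}-heard beep is indeed in round $t^*$ (so that its $heard$ variable is correct): for group (ii) this needs that it heard nothing before $t^*$, which is part of $t^*$ being the \emph{first} good round combined with the fact that before $t^*$ every round with a beep had a fault or no listener --- here I would lean on the definition of $t^*$ together with Lemma \ref{notermination}/Lemma \ref{goodround}'s setup.
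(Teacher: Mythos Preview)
Your overall decomposition into groups (i), (ii), (iii) and the probabilistic argument for ``at least two response beeps are heard'' match the paper's proof exactly. However, there is a genuine gap in your treatment of group (ii), the processors that were already awake before $t^*$ and heard the beep in round $t^*$.

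You assert that group-(ii) processors set $syncRound \leftarrow heard + 4\const + 1$ ``directly'' via line~\ref{heardoutput}, and that they schedule response beeps in rounds $t^*+2\const+1,\ldots,t^*+4\const$ via line~\ref{heardresponse}. But both of those lines are reached only if the branch at line~\ref{IfNotAlarm} is taken, and that requires analyzing \texttt{listenVector}$(myCurrentBeep+1)$ for a group-(ii) processor. Crucially, for such a processor $myCurrentBeep$ is \emph{not} the local clock value of $t^*$; it corresponds to some earlier round $s<t^*$ (the processor's most recent alarm beep before $t^*$). So the two blocks examined by \texttt{listenVector} are $[s+1,\,s+2\const]$ and $[s+2\const+1,\,s+4\const]$, and $t^*$ can land in the first block, in the second block, or past the second block (since $myNextBeep - myCurrentBeep = 4\const+i$ with $i\ge 1$). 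You never rule out the possibility that a group-(ii) processor obtains a $[h_1\ h_2]$ pattern in $\{[0\ *],[*\ 0],[*\ 1],[*\ *]\}$ and therefore takes the \emph{wrong} branch at line~\ref{IfAlarm}, setting $syncRound \leftarrow myCurrentBeep + 4\const + 1 = s + 4\const + 1 \ne t^* + 4\const + 1$. This would also mean such a processor does \emph{not} send the response beeps your group-(iii) analysis relies on; and since, as you yourself note, the listener guaranteed by the definition of $t^*$ must be an already-awake processor (a spontaneous waker beeps in its wake-up round), group (ii) is necessarily non-empty, so this case cannot be ignored.

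The paper fills this gap as follows: since $heard = t^*$ is the first beep heard (you correctly argue this from the minimality of $t^*$), and since by Lemma~\ref{silence} no beeps occur in rounds $t^*+1,\ldots,t^*+2\const$, in whichever block $t^*$ falls there is exactly one beep, and any earlier block has zero beeps; hence the \emph{first nonzero} entry of $[h_1\ h_2]$ is $1$, never $*$, which forces the pattern into the set at line~\ref{IfNotAlarm}. Once you add this argument, your proof coincides with the paper's.
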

\begin{proof}
First, consider any processor $v$ that does not beep in global round $t^*$. We will show that $v$ beeps in rounds $t^*+2\const+1,\ldots,t^*+4\const$, and that $v$ sets $syncRound$ to the local clock value that corresponds to the global round $t^*+4\const+1$. There are two cases to consider:
\begin{enumerate}
\item {\bf Suppose that $v$ wakes up in global round $t^*$.} This occurs at line \ref{wokenbybeep}. Note that, in round $t^*$, the local clock value corresponding to $t^*$ is stored in $v$'s $heard$ variable. Then, by line \ref{wokenresponse}, processor $v$ beeps in rounds $t^*+2\const+1,\ldots,t^*+4\const$, and, by line \ref{wokenoutput}, $v$ sets $syncRound$ to $heard + 4\const + 1$, which is the local clock value that corresponds to the global round $t^*+4\const+1$, as claimed.

\item {\bf Suppose that $v$ was woken before global round $t^*$.} Let $s$ be the latest round before $t^*$ during which $v$ beeped. Note that, in round $t^*$, the local clock value corresponding to $s$ is stored in $v$'s $myCurrentBeep$ variable. By the choice of $s$ and $t^*$, the beep heard by $v$ during $t^*$ is the first beep that $v$ hears between rounds $myCurrentBeep$ and $myNextBeep$. Thus, the local clock value corresponding to $t^*$ is stored in $v$'s $heard$ variable. By Lemma \ref{silence}, no beeps occur in the $2\const$ rounds following round $heard$. In particular, this means that the first non-zero entry of $[h_1\ h_2]$ cannot be a $*$, so the {\bf if} condition on line \ref{IfAlarm} evaluates to false. Since the {\bf if} conditions on lines \ref{IfNotAlarm} and \ref{IfAlarm} exhaust all 9 possibilities for the vector $[h_1\ h_2]$, it follows that the {\bf if} condition on line \ref{IfNotAlarm} evaluates to true. Therefore, by line \ref{heardresponse}, $v$ beeps in rounds $t^* + 2\const + 1,\ldots, t^*+4\const$, and, by line \ref{heardoutput}, $v$ sets $syncRound$ to $heard+4\const+1$, which is the local clock value that corresponds to the global round $t^*+4\const+1$, as claimed.
\end{enumerate}

By the choice of $t^*$, there is at least one processor that does not beep in global round $t^*$. From what we have just shown, it follows that at least one processor $v$ beeps in rounds $t^*+2\const+1,\ldots,t^*+4\const$. Next, we show that, with probability at least $(1-\frac{\epsilon}{2})$, two or more of these beeps occur in fault-free rounds. Consider the intervals $[t^*+2\const+1,\ldots,t^*+3\const]$ and $[t^*+3\const+1,\ldots,t^*+4\const]$. Processor $v$ beeps $\const$ times in each of these intervals. It follows that, for each of these intervals, $v$ beeps in a fault-free round with probability at least $1-\probfrac$. Therefore, with probability at least $(1-\frac{\epsilon}{2})$, in the interval $[t^*+2\const+1,\ldots,t^*+4\const]$, two or more beeps by $v$ occur in fault-free rounds.

Finally, we show that if two or more of the beeps in rounds $t^*+2\const+1,\ldots,t^*+4\const$ occur in fault-free rounds, then every processor $v$ that beeps in round $t^*$ sets $syncRound$ to the local clock value that corresponds to the global round $t^* + 4\const + 1$. Consider any processor $v$ that beeps in round $t^*$, and suppose that two or more of the beeps in rounds $t^*+2\const+1,\ldots,t^*+4\const$ are successful. The beep by $v$ in round $t^*$ occurs at line \ref{alarm}, and, note that in round $t^*$,  the local clock value corresponding to $t^*$ is stored in $v$'s $myCurrentBeep$ variable. By Lemma \ref{silence}, no beeps occur in rounds $t^*+1,\ldots,t^*+2\const$. Since two or more beeps in rounds $t^*+2\const+1,\ldots,t^*+4\const$ occur in fault-free rounds, it follows that, at processor $v$, the vector $[h_1\ h_2]$ is equal to $[0\ *]$. Therefore, the {\bf if} condition on line \ref{IfAlarm} evaluates to true. Therefore, by line \ref{alarmoutput}, processor $v$ sets $syncRound$ to $myCurrentBeep+4\const+1$, which is the local clock value that corresponds to the global round $t^*+4\const+1$, as claimed.
\end{proof}

Finally, we show that Algorithm \texttt{GlobalSync} runs in constant time and fails with probability at most $\epsilon$ for any given constant $\epsilon > 0$.

\begin{theorem}\label{sync}
Fix any constant $\epsilon > 0$. With probability at least $1-\epsilon$, all processors terminate Algorithm \texttt{GlobalSync} in the same global round {\tt sync},
which occurs  $O(1)$ rounds after the first wake-up.
\end{theorem}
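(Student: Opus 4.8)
The plan is to prove the theorem by a two-case analysis according to whether all processors wake up spontaneously in the same global round, using the lemmas already established. In the case where all spontaneous wake-ups coincide, say in global round $t_1$, there is nothing new to do: Lemma \ref{simultaneous} already asserts that with probability $1$ every processor terminates in the common global round $t_1 + 12\const^2 + (3\const)(3\const+1)/2$. Since $\const$ is a constant fixed in terms of $p$ and $\epsilon$ alone, this round is $t_1 + O(1)$, so one can take {\tt sync} to be this value and the claim holds (with probability $1$, in fact).

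For the remaining case, where not all spontaneous wake-ups coincide, I would let $t_1$ denote the round of the first spontaneous wake-up and let $E_1$ be the event, from Lemma \ref{goodround}, that there exists a global round $t^* \le t_1 + 12\const^2 + (3\const)(3\const+1)/2$ in which no processor has terminated, at least one processor beeps, at least one processor listens, and no fault occurs; by that lemma, $\Pr[E_1] \ge 1-\frac{\epsilon}{2}$. Conditioning on $E_1$ and taking $t^*$ to be the first such round, I would then invoke Lemma \ref{sameround}, which guarantees that, conditioned on the existence of this $t^*$, all processors terminate in global round $t^*+4\const+1$ with probability at least $1-\frac{\epsilon}{2}$. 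Combining, with probability at least $(1-\frac{\epsilon}{2})^2 \ge 1-\epsilon$ all processors terminate in the single global round $t^*+4\const+1$, and since $t^*+4\const+1 \le t_1 + 12\const^2 + (3\const)(3\const+1)/2 + 4\const + 1 = t_1 + O(1)$, setting {\tt sync} to this common termination round finishes the proof.

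The step I expect to require the most care — and the one a referee would scrutinize — is the chaining of Lemmas \ref{goodround} and \ref{sameround}. I would emphasize that Lemma \ref{sameround} is stated \emph{conditionally} on "$t^*$ is the first round with the four good properties," so that the bound $\Pr[E_1]\cdot\Pr[\text{all terminate at }t^*+4\const+1 \mid E_1] \ge (1-\frac{\epsilon}{2})^2$ is exactly what the two lemmas deliver, requiring no independence claim between the fault outcomes used to locate $t^*$ and those used in rounds $t^*+1,\dots,t^*+4\const$. A secondary, purely bookkeeping matter is to confirm that $t^*+4\const+1$ still lies within the window in which no processor has prematurely terminated; this follows from the bound $t^* \le t_1 + 12\const^2 + (3\const)(3\const+1)/2$ together with Lemma \ref{notermination}, and from the fact that every processor that is still running by round $t^*$ sets $syncRound$ to the local value corresponding to $t^*+4\const+1$ (as shown in the proof of Lemma \ref{sameround}).
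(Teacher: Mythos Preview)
Your proposal is correct and follows essentially the same two-case approach as the paper's own proof, invoking Lemma~\ref{simultaneous} in the simultaneous-wake-up case and chaining Lemmas~\ref{goodround} and~\ref{sameround} otherwise. The only cosmetic difference is that you combine the two $\epsilon/2$ bounds multiplicatively via $(1-\tfrac{\epsilon}{2})^2 \ge 1-\epsilon$, whereas the paper uses the union bound $\tfrac{\epsilon}{2}+\tfrac{\epsilon}{2}=\epsilon$; both are valid, and your added remarks on conditioning and on the termination window are sound (if slightly more cautious than the paper's terse treatment).
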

\begin{proof}
Let $t_1$ be the first round in which a wake-up occurs. In the case where all processors wake up spontaneously in the same round, Lemma \ref{simultaneous} implies that, with probability 1, all processors terminate Algorithm \texttt{GlobalSync} in global round {\tt sync} $=t_1+12\const^2+(3\const)(3\const+1)/2$. In the case where not all processors wake up spontaneously in the same round, Lemmas \ref{goodround} and \ref{sameround} imply that all processors terminate Algorithm \texttt{GlobalSync} in global round  {\tt sync} $=t^*+4\const+1$, where $t^*\leq t_1+12\const^2+(3\const)(3\const+1)/2$, with error probability at most $\frac{\epsilon}{2}+\frac{\epsilon}{2}=\epsilon$.
\end{proof}

\section{Consensus}

In this section, we provide a deterministic decision procedure which achieves consensus assuming that global synchronization has been done previously. It is performed after Algorithm {\tt GlobalSync} and has the following property.
Let ${\tt sync}$ be the global round in which all processors in the channel terminate their execution of Algorithm {\tt GlobalSync}. Algorithm {\tt Decision} achieves consensus with error probability at most $\epsilon$ in the global
round $s={\tt sync}+O(\log w)$, where $w$ is the smallest of all input values of processors in the channel.

Consider the input value $val$ of a processor $v$ and let  $\mu=(a_1,\dots , a_m)$ be its binary representation. We transform the sequence $\mu$ by replacing each bit 1 by $(10)$, each bit 0 by $(01)$ and appending
$(11)$ at the end. Hence the transformed sequence is
$(c_1,\dots ,c_{2m+2})$, where\\
$c_i=1$, for $i \in \{2m+1,2m+2\}$, and,\\
for $j=1,\dots , m$:\\
$c_{2j-1}=1$ and $c_{2j}=0$, if $a_j=1$,\\
$c_{2j-1}=0$ and $c_{2j}=1$, if $a_j=0$.

The sequence $(c_1,\dots ,c_{2m+2})$ is called the {\em transformed input value} of processor $v$ and is denoted by $val^*$.
Notice that if the input values of two processors are different, then there exists an index for which the corresponding bits of their transformed input values differ (this is not necessarily the case for the original input values, since one of the binary representations
might be a prefix of the other).

The high-level idea of Algorithm {\tt Decision} is the following. 
A processor beeps and listens in time intervals of prescribed length, starting in global round ${\tt sync}+1$, according to its transformed input value. If it does not hear any beep, it concludes that all input values are identical and outputs its input value. Otherwise, it concludes that there are different input values and then 
outputs a default value. We will prove that these conclusions are correct with probability at least $1-\epsilon$, and that all processors make the decision in a common global round $s={\tt sync}+O(\log w)$.

We now give the pseudocode of the algorithm executed by a processor whose input value is $val$. We assume that the algorithm is started in global round ${\tt sync}+1$, and we let $r$ be the processor's local clock value corresponding to the global round ${\tt sync}$. Let $x$ be the smallest positive integer such that $p^x <\epsilon /2$.
Let $val_0$ be the smallest integer in $V$, which we will use as the default decision value.

\begin{algorithm}[H]
\caption{\texttt{Decision}}
\begin{algorithmic}[1]

\State $(c_1,\dots ,c_k) \leftarrow val^*$
\State $i \leftarrow 1$
\State $heard \leftarrow \mathit{false}$
\State {\bf while} ($heard=$ {\em false} {\bf and} $i\leq k$) {\bf do}
\State \indent {\bf if} $c_i=1$ {\bf then} beep for $x$ rounds and then listen for $x$ rounds
\State \indent {\bf if} $c_i=0$ {\bf then} listen for $x$ rounds and then beep for $x$ rounds
\State \indent {\bf if} a beep was heard {\bf then} $heard \leftarrow$ {\em true}
\State \indent $i \leftarrow i+1$
\State {\bf if} $heard=$ {\em false} {\bf then} output $val$ in round $r+2(i-1)x+1$ \label{didnthear}
\State {\bf else} output $val_0$ in round $r+2(i-1)x+1$ \label{didhear}
%\State \indent {\bf if} a beep was heard in some round $r+2x(i-2)+g$, where $1\leq g \leq x$ {\bf then}
%\State \indent \indent listen for $x$ rounds,  beep for $x$ rounds, and {\bf output} $val _0$ in round $r+2xi +1$\label{heard-first}
%\State \indent {\bf else} {\bf output} $val _0$ in round $r+2x(i-1) +1$\label{heard-second} 

\end{algorithmic}
\label{decision}
\end{algorithm}

The following result shows that, with error probability at most $\epsilon$, upon completion of Algorithm {\tt Decision}, all processors in the channel correctly solve consensus in the same round, and this round occurs $O(\log w)$ rounds after global round ${\tt sync}$, where $w$ is the smallest of all input values of processors in the channel.

\begin{theorem}
Let ${\tt sync}$ be the common global round in which all processors terminate their execution of Algorithm {\tt GlobalSync}, and let $w$ be the smallest of all input values of processors in the channel. There exists a global round $s={\tt sync}+O(\log w)$ such that, with probability at least $1-\epsilon$, upon completion of Algorithm {\tt Decision}, all processors in the channel output the same value in global round $s$, and this value is their common input value if all input values were identical.
\end{theorem}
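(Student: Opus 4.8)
The plan is to split into two cases according to whether all input values coincide, using throughout that, since a global clock was established in global round ${\tt sync}$, all processors run \texttt{Decision} in lockstep; in particular, since the parameter $r$ of a processor is its local clock value at global round ${\tt sync}$, its local round $r+t$ is exactly global round ${\tt sync}+t$. I will view the execution as divided into \emph{phases}: phase $i$ is the block of $2x$ consecutive global rounds in which the loop body runs with loop variable $i$, its first $x$ rounds forming \emph{block A} and its last $x$ rounds \emph{block B}; a processor with $c_i=1$ beeps throughout block A and listens throughout block B, and one with $c_i=0$ does the reverse. Since $p$ and $\epsilon$ are constants, $x=O(1)$, and in any $x$-round block in which at least one processor beeps, some round is fault-free with probability at least $1-p^x>1-\epsilon/2$. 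I would first dispose of the case where all inputs equal $w$: then every processor has the same transformed value $(c_1,\dots,c_k)$ with $k=O(\log w)$, so in each phase every processor beeps in the same block and listens in the same block, a listener never coincides with a beeper, and no beep is ever heard whatever the faults are; hence every processor runs all $k$ phases with $heard$ still \emph{false} and, by line~\ref{didnthear}, outputs $w$ in global round ${\tt sync}+2kx+1={\tt sync}+O(\log w)$, which settles validity, agreement, time agreement and termination with probability $1$.

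\textbf{Locating the decisive phase.} Now assume the inputs are not all equal and let $w$, held by $v_{\min}$, be the smallest. A value with fewer bits than $w$ would be smaller than $w$, so the transformed value of $v_{\min}$, written $v_{\min}^*$, has minimum length $\ell:=2m_{\min}+2=O(\log w)$ ($m_{\min}$ being the number of bits of $w$), and every processor's transformed value has length $\ge\ell$. From the remark before the algorithm (transformed values of distinct inputs differ somewhere) and the fact that the end-marker $(1,1)$ differs from each data pair $(1,0),(0,1)$, I would verify by a short case analysis that for every processor $v'$ with input $\neq w$ the sequences $v_{\min}^*$ and $(v')^*$ already differ at some index $\le\ell$. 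Set $S_j=\{v:|v^*|\ge j\}$, partitioned into $A_j=\{v\in S_j:c_j=1\}$ and $B_j=\{v\in S_j:c_j=0\}$, and let $j^*$ be the least $j$ with both $A_j$ and $B_j$ non-empty; the difference just exhibited between $v_{\min}^*$ and $(v')^*$ shows such a $j$ exists and forces $j^*\le\ell$. Crucially, $j^*$ depends on the multiset of input values only, not on the faults.

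\textbf{Analyzing phase $j^*$.} I would then argue \emph{deterministically} that no beep is heard and no processor leaves the loop in phases $1,\dots,j^*-1$: no transformed value is exhausted before phase $j^*$ since all have length $\ge\ell\ge j^*$, and by minimality of $j^*$ every earlier phase has $A_j=\emptyset$ or $B_j=\emptyset$, so (by induction on $j$, no beep having been heard so far) all active processors beep in the same block and nothing is heard. Hence all processors are active throughout phase $j^*$, both $A_{j^*}$ and $B_{j^*}$ are non-empty, and in phase $j^*$ the processors of $A_{j^*}$ beep throughout block A while those of $B_{j^*}$ listen, and conversely in block B. Only here is randomness used: block A of phase $j^*$ contains a fault-free round with probability $\ge1-\epsilon/2$, and so does block B, so by a union bound both do with probability $\ge1-\epsilon$; on that event every processor of $B_{j^*}$ hears a beep in block A and every processor of $A_{j^*}$ hears one in block B, so every processor sets $heard$ to \emph{true} during phase $j^*$ and, by line~\ref{didhear}, outputs $val_0$ in global round $s:={\tt sync}+2j^*x+1\le{\tt sync}+2\ell x+1={\tt sync}+O(\log w)$. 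Since $s$ is a function of the inputs, this gives agreement, time agreement and termination with probability $\ge1-\epsilon$ (validity being vacuous when the inputs differ), which, with the equal-inputs case, proves the theorem.

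\textbf{Anticipated main obstacle.} The delicate point is the error budget: the constant $x$ only guarantees $p^x<\epsilon/2$, so the argument can afford exactly two ``bad'' events of probability $<\epsilon/2$. This forces one to pin $j^*$ down as a deterministic function of the inputs: then phases $1,\dots,j^*-1$ cost nothing, no beep being heard there no matter what the faults do, and all the randomness is confined to the two $x$-round blocks of the single phase $j^*$. The other item requiring care, and the purpose of the $(1,1)$ end-marker in the transformed encoding, is the bound $j^*\le\ell=O(\log w)$; everything else is routine bookkeeping with local versus global clock values.
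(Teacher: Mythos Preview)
Your proof is correct and follows essentially the same approach as the paper: the same case split on whether all inputs coincide, the same identification of the first index $j^*$ at which two transformed values differ, the same deterministic observation that no beep can be heard in earlier phases, and the same single union-bound over the two $x$-round blocks of phase $j^*$. If anything, you supply slightly more detail than the paper does---you explicitly justify $j^*\le\ell$ via the end-marker, and you spell out why no processor exits the loop before phase $j^*$---whereas the paper simply asserts ``let $j\le k_1$ be the first index in which two of these transformed input values differ'' and ``by the choice of $j$, no beep was heard''.
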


\begin{proof}
First, suppose that the input values of all processors in the channel are identical. Let $k \in O(\log w)$ be the length of their common transformed input value. 
Then each processor leaves the {\bf while} loop with the value of the variable $heard$ equal to false, and consequently, at line \ref{didnthear}, it outputs the common input value in round $r+2xk+1$, which is its local clock value corresponding to global round ${\tt sync}+2xk+1$.
Since $x$ is a constant, we have $2xk+1 \in O(\log w)$, which concludes the proof in this case.

In the remainder of the proof, we suppose that there are at least two distinct input values. Let $k_1$ be the length of the transformed input value $w^*$ corresponding to the input value $w$.  
Consider all transformed input values of processors in the channel, and let $j\leq k_1$ be the first index in which two of these transformed input values differ. 

For any $t>0$, let $A_t$ be the global time interval $\{{\tt sync}+2x(t-1)+1,\dots, {\tt sync}+2x(t-1)+x\}$, and let $B_t$ be the global time interval $\{{\tt sync}+2x(t-1)+x+1,\dots, {\tt sync}+2x(t-1)+2x\}$.
Let $E$ be the event that at least one round in the time interval $A_j$ is fault free and at least one round in the time interval $B_{j}$ is fault free.
By the definition of $x$, the probability of event $E$ is at least $1-\epsilon$. Suppose that event $E$ holds. By the choice of $j$, no beep was heard in the channel in global rounds $\{{\tt sync}+1,\ldots,{\tt sync}+2x(j-1)\}$, hence all processors in the channel participate in the $j^{th}$ iteration of the loop.
Consider any processor $v$ for which the $j^{th}$ bit of its transformed input value
is 0 and any processor $v'$ for which the $j^{th}$ bit of its transformed input value is 1. Processor $v$ listens in all rounds of $A_j$ and beeps in all rounds of $B_j$, whereas processor $v'$ beeps in all rounds of $A_j$ and listens
in all rounds of $B_j$. Hence, $v$ hears at least one beep in the time interval $A_j$, and $v'$ hears at least one beep in the time interval $B_j$. Therefore, both $v$ and $v'$ set $heard$ equal to true in iteration $j$ of the {\bf while} loop. Consequently, each processor outputs the default value $val_0$ at line \ref{didhear} in round $r+2xj+1$, which is its local clock value corresponding to global round ${\tt sync} + 2xj+1$. Since $x$ is constant and $j \leq k_1 \in O(\log w)$, we have 
$2xj+1 \in O(\log w)$, which concludes the proof in the case where there are at least two distinct input values.
%Consequently, in line \ref{heard-first}, processor $v$ listens in all rounds of $A_{j+1}$, beeps in all rounds of $B_{j+1}$, and outputs the default value $val _0$ in round $r+2x(j+1) +1$. Processor $v'$ does not hear any beep until the end of time interval $A_{j+1}$, and hears at least one beep in the time interval $B_{j+1}$.
%Consequently, in line \ref{heard-second}, processor $v'$ outputs the default value $val _0$ in round $r+2x(j +1)+1$.
%This shows that, if there are at least two distinct input values, then all processors output the default value $val _0$ in round $r+2x(j+1)+1$. 
\end{proof}

Finally, given a bound $\epsilon >0$ on error probability of consensus, we first run Algorithm  {\tt GlobalSync} and then Algorithm {\tt Decision}, each with error
probability bound $\frac{\epsilon}{2}$, to get the following corollary.

\begin{corollary}
%Fix any constant $\epsilon >0$.  With error probability at most $\epsilon$,  consensus can be solved deterministically using beeps in a fault-prone MAC in time
%$O(\log w)$, where  $w$ is the smallest of all input values of processors in the channel. 
Fix any constant $\epsilon >0$ and consider a fault-prone MAC with communication by beeps, where $w$ is the smallest of all input values of processors in the channel. 
With error probability at most $\epsilon$, consensus can be solved deterministically in the same round, $O(\log w)$ rounds after the first wakeup.
\end{corollary}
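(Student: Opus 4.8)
The plan is to derive the corollary purely by composition, running Algorithm \texttt{GlobalSync} first and then Algorithm \texttt{Decision}, and splitting the error budget $\epsilon$ evenly between the two phases. Concretely, I would instantiate \texttt{GlobalSync} with error parameter $\epsilon/2$ (so its constant $\const$ is chosen with $p^{\const}<\epsilon/8$) and \texttt{Decision} with error parameter $\epsilon/2$ (so its constant $x$ is chosen with $p^{x}<\epsilon/4$). Theorem \ref{sync} then says that, with probability at least $1-\epsilon/2$, all processors terminate \texttt{GlobalSync} in a common global round $\texttt{sync}$ occurring $O(1)$ rounds after the first wake-up.

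The key steps, in order, are as follows. First, invoke Theorem \ref{sync} as above to obtain the common round $\texttt{sync}$ with probability at least $1-\epsilon/2$. Second, have every processor begin \texttt{Decision} in its local round that corresponds to global round $\texttt{sync}+1$ — this is well defined since each processor knows its own local clock value at the instant it exits \texttt{GlobalSync}, and all processors exit in the same global round. Third, conditioned on the success of \texttt{GlobalSync}, apply the consensus theorem for Algorithm \texttt{Decision}: with probability at least $1-\epsilon/2$ all processors output the same value in a common global round $s=\texttt{sync}+O(\log w)$, and this value equals the common input value whenever all inputs agree. Fourth, take a union bound over the two failure events, each of probability at most $\epsilon/2$, to conclude that the overall error probability is at most $\epsilon$. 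Finally, add the running times: the first phase costs $O(1)$ rounds after the first wake-up and the second costs $O(\log w)$ rounds after $\texttt{sync}$, so $s$ occurs $O(1)+O(\log w)=O(\log w)$ rounds after the first wake-up, and all processors decide in the single global round $s$, which gives the ``same round'' part of the statement.

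The only genuinely non-bookkeeping point — and hence the step I expect to be the main obstacle — is justifying the clean hand-off between the two phases: I must argue that ``all processors terminate \texttt{GlobalSync} in the common global round $\texttt{sync}$'' really does license all processors to start \texttt{Decision} in a synchronized manner, i.e., that the phrase ``global round $\texttt{sync}+1$'' is operationally meaningful to each processor even though no processor ever observes global time. This holds because agreement on a global round is precisely agreement on a common offset of the local clocks, so once \texttt{GlobalSync} succeeds every processor can treat all subsequent global rounds as a shared reference; this is exactly the property that \texttt{GlobalSync} was designed to provide. Everything else — choosing $\const$ and $x$, splitting $\epsilon$, conditioning, the union bound, and summing the two running times — is routine.
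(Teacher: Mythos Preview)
Your proposal is correct and follows essentially the same approach as the paper: run \texttt{GlobalSync} and then \texttt{Decision}, each with error bound $\epsilon/2$, and combine via a union bound and additive running time. The paper's own argument for the corollary is in fact even terser than yours; your additional remarks on the choice of the constants and on why the hand-off between the two phases is legitimate are sound and simply make explicit what the paper leaves implicit.
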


We conclude this section by showing that, even in a model where every round in the MAC is fault-free and all processors are woken up spontaneously in the same round, deterministic consensus with $m$-bit inputs requires $\Omega(m)$ rounds, which implies that our consensus algorithm has optimal time complexity.

\begin{theorem}
Consensus with $m$-bit inputs in a fault-free MAC with beeps requires $\Omega(m)$ rounds.
\end{theorem}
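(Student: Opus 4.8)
The plan is an indistinguishability argument exploiting the fact that a beeping processor learns nothing in a round, and a listening processor learns only whether or not it heard a beep, so that over $T$ rounds a processor's entire observation history is, in effect, a string of length $T$ over $\{\text{heard},\text{not heard}\}$ --- and in particular, there is one canonical history in which a processor never hears a beep. First I would fix a deterministic algorithm that solves consensus over the input set $V=\{0,1,\dots,2^m-1\}$ in a fault-free MAC and terminates, in the worst case, within $T$ rounds of the common wake-up round; I restrict attention to executions in which all participating processors wake up spontaneously in the same round, so that every local clock equals the global clock. Since the model is anonymous, deterministic, and without access to randomness, a processor's action in round $t$ (beep or listen), the round in which it outputs, and the value it outputs are all functions of its input value and its observation history alone. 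For $v\in V$, define the \emph{silent run} of $v$ to be the behaviour of a processor with input $v$ under the history in which it never hears a beep: in round $t$ it performs a fixed action $\beta_t(v)\in\{\text{beep},\text{listen}\}$, it outputs in some round $\tau(v)\le T$, and it outputs some value $o(v)$. Padding with ``listen'' after round $\tau(v)$ yields a pattern $\beta(v)=(\beta_1(v),\dots,\beta_T(v))\in\{\text{beep},\text{listen}\}^T$.

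Next I would establish two claims. \emph{Claim 1: $o(v)=v$ for every $v$.} Run the algorithm on two processors that both have input $v$ and wake up together; by induction on the round number they are in the same state in every round and hence perform the same action, so either both beep (and neither listens) or both listen (and neither beeps), whence neither ever hears a beep. Thus each performs exactly its silent run and outputs $o(v)$, and Validity forces $o(v)=v$. \emph{Claim 2: the map $v\mapsto\beta(v)$ is injective.} Suppose $\beta(a)=\beta(b)$ with $a\ne b$, and run the algorithm on one processor with input $a$ and one with input $b$, waking up together. By induction on the round number, no processor that has not yet output has ever heard a beep: in each round every still-running processor performs its silent-run action, and since $\beta(a)=\beta(b)$ all still-running processors perform the same action, so (exactly as in Claim 1) no still-running processor hears a beep, while a processor that has already output has irrevocably committed its decision and is irrelevant. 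Hence the first processor completes the silent run of $a$ and outputs $a$, the second outputs $b$, and since $a\ne b$ this violates Agreement, a contradiction. Therefore $\beta$ is injective, so $2^m=|V|\le 2^T$, i.e.\ $T\ge m=\Omega(m)$. Combined with the $O(\log w)$ bound of Algorithm \texttt{Decision} (preceded by \texttt{GlobalSync}, which runs in $O(1)$ rounds), where $w$ may be taken to be an $m$-bit value so that $\log w=\Theta(m)$ in the worst case, this shows the consensus algorithm is time-optimal up to a constant factor.

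I expect the main obstacle to be making the induction in Claim 2 fully rigorous against the precise beeping semantics: one must verify that a processor which has already terminated cannot cause a still-running processor to hear a beep (and thereby deviate from its silent run), and one must be at ease with the decision rounds $\tau(a)$ and $\tau(b)$ being unequal --- here it is essential that Agreement constrains only the output values, not the rounds in which they are produced, and that the ``listen after deciding'' padding convention keeps $\beta(v)$ well-defined as an element of $\{\text{beep},\text{listen}\}^T$. A secondary point to handle with care is the appeal to anonymity: the silent-run pattern $\beta(v)$ must depend only on the input value and not on any processor identifier, which is consistent with the identifier-free model used throughout the paper.
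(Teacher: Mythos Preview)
Your proposal is correct and follows essentially the same indistinguishability/pigeonhole argument as the paper: both show that the ``silent'' beeping pattern determined by an input must be injective (otherwise two processors with different inputs but identical patterns would never hear each other and would output different values, violating Agreement), and hence the pattern length must be at least $m$. The only cosmetic differences are that you establish Validity of the silent-run output via two identical processors rather than a single lone processor, and you phrase the counting directly as $2^m\le 2^T$ rather than as a pigeonhole contradiction with an $o(m)$ assumption; your treatment of padding after termination and of anonymity is more explicit than the paper's but not substantively different.
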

\begin{proof}
Consider any consensus algorithm $\mathcal{A}$. Assume that, for every $m$-bit input value $s$, the execution of $\mathcal{A}$ with input $s$ by a single processor on the channel uses $o(m)$ rounds. For any input $s$, let $\mathrm{Pattern}(s)$ be the beeping pattern of a processor that is alone on the channel and executes $\mathcal{A}$ with input $s$. By the Pigeonhole Principle, there exist distinct $m$-bit inputs $a$ and $b$ such that $\mathrm{Pattern}(a) = \mathrm{Pattern}(b)$. 

For each $s \in \{a,b\}$, let $\alpha_s$ be the execution of $\mathcal{A}$ in the case where a processor $v_s$ is alone on the channel and is given input $s$. By Validity, for each $s \in \{a,b\}$, at the end of execution $\alpha_s$, processor $v_s$ must output $s$. Next, consider the execution $\alpha_{a,b}$ of $\mathcal{A}$ in the case where processors $v_a$ and $v_b$ are on the channel and are given inputs $a$ and $b$, respectively. Since $\mathrm{Pattern}(a) = \mathrm{Pattern}(b)$, it follows that executions $\alpha_a$ and $\alpha_{a,b}$ are indistinguishable to processor $v_a$, and that executions $\alpha_b$ and $\alpha_{a,b}$ are indistinguishable to processor $v_b$. Therefore, in execution $\alpha_{a,b}$, processor $v_a$ outputs $a$ and processor $v_b$ outputs $b$, which contradicts Agreement. Therefore, we incorrectly assumed that, for every $m$-bit input $s$, the execution of $\mathcal{A}$ with input $s$ by a single processor on the channel uses $o(m)$ rounds. It follows that there exists an execution of $\mathcal{A}$ that uses $\Omega(m)$ rounds, as claimed.
\end{proof}

%\newpage

%%%%%%%%%%%%%%%%%%%%%%%%%%%%%%%%%%%%%%%%%%%%%%%%%%%%%%%%%%%
\bibliographystyle{plain}

\begin{thebibliography}{12}
%%%%%%%%%%%%%%%%%%%%%%%%%%%%%%%%%%%%%%%%%%%%%%%%%%%%%%%%%%%

\bibitem{AABCHK}
Y. Afek, N. Alon, Z. Bar-Joseph, A. Cornejo, B. Haeupler, F. Kuhn, Beeping a maximal independent set.
Proc. 25th International Symposium on Distributed Computing (DISC 2011), LNCS 6950, 32-50.

%\bibitem{An}
%D.~Angluin, Local and global properties in networks of processors. 
%Proc. 12th Annual ACM Symposium on Theory of Computing (STOC 1980), 82--93.

 \bibitem{AW}
H. Attiya and J. Welch,
Distributed Computing, 2004,
John Wiley and Sons, Inc.


\bibitem{BGI}
R. Bar-Yehuda, O. Goldreich, A. Itai,
On the time complexity of broadcast in radio networks:
an exponential gap between determinism and randomization,
Journal of Computer and System Sciences 45 (1992) 104 - 126.








%\bibitem{B}
%J.E. Burns, A Formal Model for Message Passing Systems,
%Tech. Report TR-91, Computer Science Department,
%Indiana University, Bloomington, September 1980.



%\bibitem{C}
%J. Chalopin,
%Local Computations on Closed Unlabelled Edges: The Election Problem and the Naming Problem 
%Proc. 31st Conference on Current Trends in Theory and Practice of Computer Science (SOFSEM 2005), 82-91.
%
%\bibitem{CM}
%J. Chalopin and Y. M\'etivier,
%Election and Local Computations on Edges.
%Proc. Foundations of Software Science and Computation Structures (FoSSaCS 2004), 90-104.
%




 










 
 


\bibitem{CGKR}
B.S. Chlebus, L. Gasieniec, D.R. Kowalski and T. Radzik,
On the wake-up problem in radio networks,  
{\em Proc.  32nd Colloquium on Automata, Languages and Programming (ICALP 2005)}, LNCS 3580, 347 - 359. 

%\bibitem{ChK}
 %B.S. Chlebus and D.R. Kowalski, A better wake-up in radio networks,  
 %{\em Proc. 23rd ACM Symposium on Principles of Distributed Computing (PODC 2004)},  266 - 274. 
 
 \bibitem{CKS}
B.S. Chlebus,  D.R. Kowalski and M. Strojnowski, 
Scalable quantum consensus with crash failures,
{\em Proc. 24th International Symposium on Distributed Computing (DISC 2010)}, LNCS 6343, 236-250.

 
 \bibitem{Cho08}
G. Chockler, M. Demirbas, S. Gilbert, N.A. Lynch, C.C. Newport and T. Nolte, 
Consensus and collision detectors in radio networks, 
{\em Distributed Computing}, 21 (2008), 55 - 84.

\bibitem{CMS}
A.E.F. Clementi, A. Monti, R. Silvestri, 
Selective families, superimposed codes, and broadcasting on unknown
radio networks,
Proc. 12th Ann. ACM-SIAM Symp. on Discrete Algorithms (SODA 2001),
709 - 718.
 
% \bibitem{CGK}
%M. Chrobak, L. Gasieniec and D.R. Kowalski, The wake-up problem in multi-hop 
%radio networks, 
%{\em Proc. 15th ACM-SIAM Symposium on Discrete Algorithms (SODA 2004)}, 985 - 993. 

\bibitem{CK}
A. Cornejo, F. Kuhn, Deploying wireless networks with beeps,
Proc. 24th International Symposium on Distributed Computing (DISC 2010), LNCS 6343, 148-162.



\bibitem{CR}
A. Czumaj and W. Rytter, Broadcasting algorithms in radio networks with un- 
known topology, 
{\em Proc. 44th IEEE Symposium on Foundations of Computer Science (FOCS 2003)}492 - 501.





\bibitem{CGKP}
J. Czyzowicz, L. Gasieniec, D. Kowalski, A. Pelc, Consensus and mutual exclusion in a multiple access channel, 
IEEE Transactions on Parallel and Distributed Systems 22 (2011), 1092-1104. 

\bibitem{FSW}
K.-T. Forster, J. Seidel, R. Wattenhofer,
Deterministic leader election in multi-hop beeping networks,
Proc.28th International Symposium on Distributed Computing (DISC 2014), LNCS 8784, 212-226.





\bibitem{GPP}
L. Gasieniec, A. Pelc and D. Peleg, The wakeup problem in synchronous broadcast systems, {\em SIAM Journal on Discrete Mathematics} 14 (2001), 207-222. 

\bibitem{GH}
M. Ghaffari and B. Haeupler,
Near optimal leader election in multi-hop radio networks,
Proc. 24th Annual ACM-SIAM Symposium on Discrete Algorithms (SODA 2013),  748-766.



\bibitem{GK}
S. Gilbert and D. R. Kowalski, Distributed agreement with optimal communication complexity,
{\em Proc. 21st ACM-SIAM Symposium on Discrete Algorithms (SODA 2010)}, 965-977.

\bibitem{G}
J.N. Gray, Notes on data base operating systems, In: R. Bayer,  R.M.Graham and G. Seegmuller, Eds.,
Operating systems: An Advance Course, LNCS 60, p. 465. 

\bibitem{GW}
A.G. Greenberg, S. Winograd,  
A lower bound on the time needed in the worst case to resolve conflicts 
deterministically in multiple access channels,
Journal of the ACM 32 (1985) 589 - 596. 



\bibitem{Ly}
N.A. Lynch, 
Distributed Algorithms,
Morgan Kaufmann Publ., Inc., 1996.

\bibitem{MR}
Y. Moses, M. Raynal, Revisiting simultaneous consensus with crash failures, Journal of Parallel and Distributed Computing 69 (2009), 400-409.

\bibitem{PSL}
M.C. Pease, R.E. Shostak, L. Lamport, 
Reaching agreement in the presence of faults, 
{\em J. ACM} 27 (1980), 228 - 234.

\bibitem{Ra}
M. Raynal,  Fault-Tolerant Agreement in Synchronous Distributed Systems, Morgan \& Claypool Publishers 2010.

\bibitem{SW1}
N. Santoro, P. Widmayer, Time is not a healer, Proc. 6th Annual Symposium on Theoretical Aspects of Computer Science (STACS 1989), 304-313.

\bibitem{SW2}
N. Santoro, P. Widmayer, Distributed function evaluation in presence of transmission faults, Proc.  International Symposium on Algorithms (SIGAL 1990), 358--367.

\bibitem{Wil}
D.E. Willard, 
Log-logarithmic Selection Resolution Protocols in a Multiple Access Channel,
SIAM J. on Computing 15 (1986), 468-477. 






%\bibitem{Y}
%A. C-C. Yao, Probabilistic computations: Towards a unified measure of 
%complexity, Proc. 18th Annual IEEE Conference on Foundations of Computer Science, (FOCS 1977), 222-227.



%%%%%%%%%%%%%%%%%%%%%%%%%%%%%%%%%%%%%%%%%%%%%%%%%%%%%%%%%%%
\end{thebibliography}

%%%%%%%%%%%%%%%%%%%%%%%%%%%%%%%%%%%%%%%%%%%%%%%%%%%%%%%%%%% 

\end{document}